\title{Generating Significant Examples for\Eol
       Conceptual Schema Validation\EOL
       {\normalsize Asymetrix Report 94-3}
}
\author{
   H.A. Proper\\
   Asymetrix Research Laboratory\\
   Department of Computer Science\\
   University of Queensland\\
   Australia 4072\\
   E.Proper@acm.org}
\date{\Version}
   \def\Scale{0.9}
   \def\epsfsize#./##2{\Scale#./}
\begin{document}
   \maketitle
   {\sc Published as:}
\begin{quote}
  H.A.~(Erik) {Proper}. {Generating Significant Examples for Conceptual Schema Validation}. Technical report, Asymetrix Research Laboratory, University of Queensland, Brisbane, Queensland, Australia, 1994.
\end{quote}

   \begin{abstract}
   This report bases itself on the idea of using concrete examples to verify 
   conceptual schemas, and in particular cardinality constraints.
   When novice ORM modellers model domains, the selection of proper cardinality
   constraints for relationship types is quite often prone to errors.
   In this report we propose a mechanism for the generation of significant 
   examples for selected subschemas. 
   The generated examples are significant in the sense that they illustrate 
   the possible combinations of instances that are allowed with respect to the
   cardinality constraints on the involved relationship types.

   In this report we firstly provide a brief informal discussion of the basic
   idea. 
   Then we present a syntactic mechanism to select the subschema for which 
   example instances are to be generated.
   This is followed by the actual example generation algorithm itself.
   We will also present, as a {\em spin-off}, an algorithm that allows us to 
   detect possible flaws in the conceptual schema by calculating the number of 
   instances that can be used to populate the types in the schema.
\end{abstract}
 
   \section{Introduction}

A key aspect in the conceptual design procedure (\cite{Book:94:Halpin:ORM})
is the use of examples to derive the initial design of conceptual schemas. 
A further use of the examples is the validation of parts of the final 
conceptual schema. 
Example populations of relationship types can be used to validate the correctness of 
the information structure, and even more importantly, for the validation of 
constraints. 
In this report we propose a mechanism to generate significant example 
populations for subsets of conceptual schemas.
This idea has been described informally in \cite{AsyReport:94:Harding:ExView}.
In general, generating a significant example population is hardly possible.
Therefore, we limit ourselves to cardinality constraints and significance
of the examples with respect to single relationship types only.

The sample populations are used to visualise for the users what the effects 
are of adding or removing cardinality constraints on relationship types, and 
to a lesser extent that of changing the information structure itself. 
An informal discussion of this idea can be found in 
\cite{AsyReport:94:Harding:ExView}.
This proposed mechanism will initially be used in DBCreate and is expected to 
migrate to InfoModeler in a later stage; after the idea has been tested by 
the user community.

The aim of DBCreate is to enable (semi!) laymen to design their own database.
The example generation tool fits quite well into that idea.
Note that it is only obvious that these semi laymen are still presumed to have 
some basic knowledge of conceptual (and preferably ORM) modelling. 
No matter how user friendly a CAD/CAM system may become, an architect desiging
a house is still required to have a basic working knowledge of the design of 
houses.

As stated before, in general it is nearly impossible to construct sample 
populations that are truely significant with respect to all constraints 
(\cite{Report:90:Bommel:PredMod}). 
In this report we therefore define a significant population of a relationship 
type to be a population that shows all allowed combinations of instances with
respect to the cardinality constraints defined on that relationship type.
Setting a less limited goal can easily lead to a combinatoric explosion.
The restrictions we made, however, are not an unreasonable limitation in the 
context of our aims. 
The most commonly used (and thus mis-used) constraints are the cardinality
constraints (totality and uniqueness).
For the other constraint types there is of course still the possibility 
of verbalising them in a semi-natural language format.

{\def\Scale{0.6} \EpsfFig[\SampleGrid]{An Example Grid}}
The example generation tool itself consists of two basic elements.
Firstly, the user must be able to select parts of a conceptual schema, and
put them on the screen in an orderly way.
These parts together form a tree\footnote{One could argue that this should
actually be a sequence of trees, however, due to the limited size of PC Screens
and user's capabilities to deal with large amount of information at the same time,
it is probably wiser to limit ourselves to a single tree of limited size.}. 
As an example, consider \SRef{\SampleGrid}.
This screen depicts four interconnected tables, and is based on the schema
depicted in \SRef{\SampleSchema}.
{\def\Scale{0.8} \EpsfFig[\SampleSchema]{Example Schema}}

In \cite{AsyReport:94:Harding:ExView}, similar examples can be found that
are discussed in more detail than we do here.
The inter-predicate (inter relationship type) uniqueness constraint shown
in \SRef{\SampleSchema} can not be handled as such by the example generation
algorithm, as the algorithm focusses on each relationship type separately 
during the generation process.
However, a natural solution appears when realising that when enforcing an
inter-predicate uniqueness constraint, this constraint is actually enforced on 
a derived relationship.
In \SRef{\IPConstraint} the inter-predicate uniqueness constraint from 
\SRef{\SampleSchema} is converted to an intra-predicate uniqueness constraint
on a derived relationship.
In \cite{PhdThesis:92:Hofstede:DataMod} and \cite{Report:90:Weide:Uniquest}
an algorithm is provided to actually construct this derived relationhip type
as part of the semantics of inter-predicate uniqueness constraints.
{\def\Scale{0.8} \EpsfFig[\IPConstraint]{Dealing with Inter-Predicate Constraints}}

{\def\Scale{0.6} \EpsfFig[\SampleGridMouseA]{An Example Mouse Over}}
An optional feature helping the user in better understanding the structure of 
the displayed examples is illustrated in figure \ref{\SampleGridMouseA} and
\ref{\SampleGridMouseB}.
These figures illustrate a possible {\em mouse over} effect.
When the mouse cursor is over one of the values in a table, arrows are
shown illustrating the connections between the instances in the table.
{\def\Scale{0.6} \EpsfFig[\SampleGridMouseB]{Another Mouse Over}}

{\def\Scale{0.6} \EpsfFig[\SampleGridMore]{Extendable Column}}
Two further aspects of the example in \SRef{\SampleGrid} that are noteworthy, 
as we elaborate them further in the remainder of the report, are the 
right-button following \SF{Customer} and the down-button following 
\SF{Line Info}.
The first button is used to indicate that more facts about customers are 
stored than currently shown on the screen, i.e.\/ the tree could be breathened.
This is even better illustrated in \SRef{\SampleGridMore}.
The second button indicates that \SF{Line Info} is a compositely identified 
object type (nested relationship types are regarded as compositly identified
object types as well).
Clicking on this button leads to the screen depicted in 
\SRef{\SampleGridExplode}.
In this screen the \SF{Line Info} column is split according to the reference
schema of object type \SF{Line Info}.
Note that the down-button is now replaced by an up-button to indicate that 
the details can be hidden again if so desired.
{\def\Scale{0.6} \EpsfFig[\SampleGridExplode]{Exploded View on Line Items}}

{\def\Scale{1.2} \EpsfFig[\SampleTree]{An Example Tree}}
As stated before, an example grid can be seen as a tree.
In \SRef{\SampleTree} we have depicted the tree that can be associated to 
this sample grid.
The tree underlying a sample grid is constructed by repeatedly selecting 
items from the conceptual schema and adding them to the existing tree.
Initially, there is no order provided in which the nodes of the tree 
should be put on the screen.
However, the items in the header can be shifted around by users at will. 
Alternatively, one could let the system re-shuffle the entire tree using the 
conceptual relevance (\cite{Article:94:Campbell:Abstraction}) of the object 
types involved as an ordering criterion. 
Furthermore, the spider query mechanism (\cite{AsyReport:94:Proper:SQ}) 
could be used to quickly put complete trees on the screen. 

The second element of the example generation tool is the generator of the
examples itself.
Given a tree as depicted in \SRef{\SampleTree}, a significant population for
the relationship types contained (in effect the edges) in the tree must be 
constructed.
As stated before, the significance in this first setup is limited to the 
generation of all valid combinations for the relationship types in the tree.
We limit ourselves to the intra predicate uniqueness and totality constraints
only.

The structure of this report is as follows, in \SRef{Forrest} we define 
the syntax of the trees underlying the sample grids, and provide a brief 
formalisation of the required concepts of Object-Role Modelling.
In \SRef{Negotiate} we discuss a negotiation mechanism for the size of 
the population of the edges in the trees.
The generation of the actual sample population is covered in \SRef{ExGen}.
Finally, \SRef{Concl} concludes the report.
For the reader who is unfamiliar with the notation style used in this report,
it is advisable to first read \cite{AsyReport:94:Proper:Formal}.

   \section{Creating a Forrest}
\SLabel{section}{Forrest}

This section discusses the syntax of a forrest for the example grid and also
provides a brief formalisation of the ORM concepts needed.
Although the example generation tool will initially be used in the context of
DBCreate (value types, entity types and binary relationship types), we already 
allow for ORM schemas as used by InfoModeler.
We start out from a formalisation of ORM based on the one used in 
(\cite{Report:94:Halpin:ORMPoly}). 
However, since only a limited part of the formalisation is needed, we do not 
cover the formalisation in full detail.

\subsection{ORM Basis}
\SLabel{subsection}{ORMBase}
A conceptual schema is presumed to consist of a set of types $\Types$. 
Within this set of types two subsets can be distinguished: 
   the relationship types $\RelTypes$, 
   and the object types $\ObjTypes$. 
Furthermore, let $\Preds$ be the set of roles in the conceptual schema. 
The fabric of the conceptual schema is then captured by two functions and two 
predicates. 
The set of roles associated to a relationship type are provided by the 
partition: $\Roles: \RelTypes \Func \Powerset(\Preds)$. 
Using this partition, we can define the function $\Rel$ which returns for each 
role the relationship type in which it is involved:
   $\Rel(r) = f \iff r \in \Roles(f)$.
Every role has an object type at its base called the player of the role.
This player is formally provided by the function: 
   $\Player: \Preds \Func \Types$. 
Subtyping of object types is captured by the predicates
   $\Spec \subseteq \ObjTypes \Carth \ObjTypes$. 
Using $\Spec$ we can define the notion of type relatedness: $x \TypeRel y$ for
object types $x$ and $y$.
This notion captures the intuition that two object types may share instances.
This relation is defined by the following four derivation rules:
\begin{enumerate}
   \item $x \in \Types ~\vdash~ x \TypeRel x$
   \item $x \Spec y ~\vdash~ x \TypeRel y$  
   \item $x \TypeRel y ~\vdash~ y \TypeRel x$
   \item $x \TypeRel y \TypeRel z ~\vdash~ x \TypeRel z$
\end{enumerate}
Note that when using ORM with the advanced concepts 
(\cite{Report:94:Halpin:ORMPoly}) such as polymorphism, sequence types, 
set types, etc., the definition of $\TypeRel$ needs to be refined. 

Instances of all non-value types must be identified in terms of instances of
other object types.
This identification is usually provided by a so called {\em reference schema}.
If $\ValueTypes$ denotes the set of value types, then the (direct!) 
identification relationship between types is presumed to be captured by the 
function:
\[ \RefSch: (\Types \SetMinus \ValueTypes) \Func 
            \Types \union \Preds^+ \union (\Preds \Carth \Preds)^+ \]
Each non-value type is either identified by a type (a super type), or
a sequence of roles (a relationship type), or a sequenc of role pairs 
(compositely identified object types).
Note that in this report we do not concern ourselves with well-formedness 
rules on reference schemas.
A function that is derived from $\RefSch$ and which is needed in the 
remainder is $\IdfObjs: \ObjTypes \Func \ObjTypes^+$.
This function returns the sequence of object types needed to directly identify
a given object type.
Its definition is provided as:
\[ \IdfObjs(x) ~\Eq~ 
      \left\{ \begin{array}{ll}
         [y]              & \mbox{if~} \RefSch(x) = [y]\Eol
         [y_1,\ldots,y_l] & \mbox{if~} \RefSch(x) = [\tuple{p_1,y_1},\ldots,\tuple{p_l,y_l}]\Eol
         [y_1,\ldots,y_l] & \mbox{if~} \RefSch(x) = [\tuple{p_1,q_1,y_1},\ldots,\tuple{p_l,q_l,y_l}]
      \end{array} \right. 
\]
Note that we presume the existance of an implicit coercion function 
between sequences and sets.
So, for example, if $S$ is a set of sequences we allow ourselves to 
write $\union S$ for the set of all elements occurring in the sequences 
in $S$.

A lot of the decissions made by the example generation algorithm are based
on the maximum size of the populations of the object types. 
Later on, an algorithm to calculate these sizes is presented. 
However, the maximum sizes of the populations of the value types should be
given by the modeller. 
For instance, a value type representing the gender of persons will usually
contain at most two instances.
To accommodate this, we presume the existance of the function 
$\DomSize: \ValueTypes \Func \N$.
One obvious requirement for this function is:
 $x,y \in \ValueTypes \land x \Spec y \implies \DomSize(x) \leq \DomSize(y)$.

Finally, in this report an important role is played by the uniqueness and
tolality constraints.
For that purpose, we presume the predicates $\Unique \subseteq \Powerset(\Preds)$
and $\Total \subseteq \Powerset(\Preds)$ to provide all uniqueness and totality
constraints.

In this report we thus only use the following components of an ORM schema:
\[ \tuple{\Types, \RelTypes, \ValueTypes, \ObjTypes, \Preds,
          \Spec, \Roles, \Player, \RefSch, \DomSize, \Unique, \Total} \]
When implementing the algorithms and ideas presented in this article, these 
components are the {\em interface} between the example generator tool and
the meta model from the fact base.

\subsection{Forrests}
A tree of an example grid is built from a set of nodes.
Let in the remainder $\Nodes$ be a set of all such nodes.
Formally, a tree can now be defined by two functions: $\EOut$ and $\Obj$.
As one object type can be represented by more than one node, the object type 
represented by a node is provided by the function:
   $\Obj: \Nodes \PartFunc \ObjTypes$.
The edges of the tree are provided by the function
   $\EOut: \Nodes \Func \Powerset(\Links \Carth \Nodes)$.
This function provides for each node the set of outgoing edges.
Using the $\EOut$ function the following ``inverse'' function can be derived, 
which returns for each node the set of incomming edges:
\begin{quote}
   $\EIn: \Nodes \Func \Powerset(\Links \Carth \Nodes)$\Eol
   $\EIn(n) ~\Eq~ \Set{\tuple{l,m}}{\tuple{l,n} \in \EOut(m)}$
\end{quote}
The edges of the tree are labelled with link information.
The set of links $\Links$ is defined by:
$\Links ~\Eq~ \Preds \union \RPreds$, where
$\RPreds ~\Eq~ \Set{\Rev{p}}{p \in \Preds}$ is the set of {\em reversed}
roles.
The reversed roles can be used to connect a node representing a relationship 
type to a node representing one of the participating object types.
Each link has a starting point and an ending point.
To access these points (object types) uniformly, we introduce the
following two generic functions:
\begin{quote}
   $\Start,\End: \Links \Func \Types$\EOl
   $\Start(x) ~\Eq~ \left\{\begin{array}{ll}
                      \Player(x) & \mbox{if~} x \in \Preds\Eol
                      \Rel(x)    & \mbox{otherwise}
                    \end{array} \right.$\EOl
   $\End(x) ~\Eq~ \left\{\begin{array}{ll}
                     \Rel(x)    & \mbox{if~} x \in \Preds\Eol
                     \Player(x) & \mbox{otherwise}
                  \end{array} \right.$
\end{quote}
Note: from now on we presume $\Rel$ and $\Player$ to be generalised to
elements from $\RPreds$ in the obvious way.

In order for $\EOut$ and $\Obj$ to span a tree, they must adhere to certain
properties.
Each edge in the tree must be a connection between the source and destination 
of the edge via the role. 
This is expressed by the following axiom:
\begin{Axiom}{T}[EdgeWF]
   For each $l \in \Links$: 
   \[ \tuple{l,m} \in \EOut(n) \implies 
      \Start(l) \TypeRel \Obj(n) \land \Obj(m) = \End(l) 
   \]
\end{Axiom}
The function $\EOut$ must indeed define a tree, so the graph spanned by
$\EOut$ has to be a connected, acyclic graph with a unique root.
\begin{Axiom}{T}
   $\card{\Proj_2(\union \FuncRan(\EOut))} = \card{\FuncRan(\EOut)}$
\end{Axiom}
\begin{Axiom}{T}
   $\Eu{x \in \FuncDom(\EOut)}{
       x \not\in \Proj_2(\union\FuncRan(\EOut))
    }$
\end{Axiom}
Note: if a directed graph has a unique root then it is automatically 
connected.
All used nodes must have an object type associated:
\begin{Axiom}{T}
   $\FuncDom(\EOut) \union \Proj_2(\union\FuncRan(\EOut)) \subseteq \FuncDom(\Obj)$
\end{Axiom}
Furthermore, a link can be used only once for an outgoing edge of a 
node.
This is formally enforced by:
\begin{Axiom}{T}
   \( 
      \tuple{l,n_1} \in \EOut(m) \land
      \tuple{l,n_2} \in \EOut(m) \implies n_1 = n_2
   \)
\end{Axiom}
One of the items that can be shown when displaying nodes on screen is the
fact that a given node can be extended with more edges.
In \SRef{\SampleGrid}, the right-button behind the \SF{Customer} indicated 
that more relationship types are available for this object type.
A node can be extended with an extra edge iff a link exists that can form a 
proper edge, and is not already used by on another edge of this node.
This property can be expressed formally as:
\[ \SF{CanExtend}(n) \iff 
   \Set{l \in \Preds}{
      \Start(l) \TypeRel \Obj(n) \land 
      l \not\in \Proj_1 \EOut(n) 
   } \not= \emptyset
\]
The set in the righthand side of the above definition can actually be used to 
fill the listbox displayed in \SRef{\SampleGridMore}.

The order in which the nodes themselves are displayed on the screen is 
recorded by the function $\Order: \Nodes \Func \N$. 
As this function must provide a total order of the nodes, we should have:
\[ \Order(x_1) = \Order(x_2) \implies x_1 = x_2 \]
One additional option of a system using an example generator is to have the 
system re-order the tree based on the conceptual relevance of the object 
types represented by the nodes. 
If $\CWeight: \Types \Func \N$ is a function returning the conceptual weight 
of types, then the system is able to order the nodes such that:
\[ \Order(n_1) < \Order(n_2) \implies 
   \CWeight(\Obj(n_1)) \leq \CWeight(\Obj(n_2)) \]
Note: more than one order may exist for the same $\CWeight$ values since 
differing object types may have the same conceptual weight.
Not all nodes need to be displayed on the screen.
Some nodes can be left implicit.
For instance, a node representing a non-objectified binary relationship
does not have to be shown on the screen.
The set of implicit nodes in a tree is identified by:
\[ 
   \ImplNodes ~\Eq~ 
   \Set{n \in \Nodes}{\Proj_1\EIn(n) \intersect  \RPreds = \emptyset \land
                      \Proj_1\EOut(n) \intersect \Preds  = \emptyset \land
                      \EIn(n) \neq \emptyset \land \EOut(n) \neq \emptyset} 
\]
From this definition immediately follows that two neighbouring nodes cannot
both be implicit.
This leads to the following lemma:
\begin{lemma}(no implicit neighbours)[NoImplNeighb]
   $n \in \Nodes \land m \in \Proj_2\EOut(n) \implies 
    \setje{n,m} \not\subseteq \ImplNodes$
\end{lemma}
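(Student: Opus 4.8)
The plan is to argue by a simple case analysis on the label of the edge that connects $n$ to $m$. Suppose $m \in \Proj_2\EOut(n)$; then by definition there is a link $l \in \Links$ with $\tuple{l,m} \in \EOut(n)$, and by the definition of $\EIn$ we then also have $\tuple{l,n} \in \EIn(m)$. Since $\Links \Eq \Preds \union \RPreds$, either $l \in \Preds$ or $l \in \RPreds$; I would treat these two cases separately and show that in the first case $n \not\in \ImplNodes$, while in the second case $m \not\in \ImplNodes$. In either case $\setje{n,m} \not\subseteq \ImplNodes$, which is what is claimed.

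For the first case, $l \in \Preds$: then $l \in \Proj_1\EOut(n) \intersect \Preds$, so $\Proj_1\EOut(n) \intersect \Preds \neq \emptyset$, which directly contradicts the second conjunct of the membership condition defining $\ImplNodes$; hence $n \not\in \ImplNodes$. For the second case, $l \in \RPreds$: from $\tuple{l,n} \in \EIn(m)$ we get $l \in \Proj_1\EIn(m) \intersect \RPreds$, so $\Proj_1\EIn(m) \intersect \RPreds \neq \emptyset$, contradicting the first conjunct of the membership condition defining $\ImplNodes$; hence $m \not\in \ImplNodes$.

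There is essentially no obstacle here; the argument is a two-line unfolding of definitions. The only points that require a little care are unfolding $\EIn$ correctly, so that the edge $\tuple{l,m} \in \EOut(n)$ is also recognised as an incoming edge of $m$, and noticing that the node being excluded flips between $n$ and $m$ depending on whether the connecting link is a plain role or a reversed role. Note also that the remaining two conjuncts in the definition of $\ImplNodes$ (non-emptiness of $\EIn$ and of $\EOut$) are not needed, and that disjointness of $\Preds$ and $\RPreds$ is not required either — only that every link lies in at least one of the two sets.
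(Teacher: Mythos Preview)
Your argument is correct and is essentially the same as the paper's: both hinge on the observation that the link $l$ joining $n$ to $m$ lies in $\Preds$ or in $\RPreds$, and each alternative violates one of the defining conditions of $\ImplNodes$ for $n$ or for $m$. The only difference is presentational---the paper argues by contradiction (assuming both nodes implicit and concluding $\EOut(n)=\emptyset$), whereas you give the equivalent direct case split; your version is in fact slightly tidier, since it does not rely on the paper's loose intermediate claim that $\Proj_1\EIn(m) = \Proj_1\EOut(n)$.
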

\begin{proof}
   Let $n \in \Nodes \land m \in \Proj_2\EOut(n)$ 
   such that $\setje{n,m} \subseteq \ImplNodes$.
   
   Since $m \in \Proj_2\EOut(n)$ it immediately follows from the
   definition of $\EIn$ that:
   \[ \Proj_1\EIn(m) = \Proj_1\EOut(n) \]
   As we presumed that $n,m \in \ImplNodes$ we in particular have:
   \[
     \Proj_1\EOut(n) \intersect \Preds  = \emptyset \mbox{~and~} 
     \Proj_1\EIn(m)  \intersect \RPreds = \emptyset
   \]
   Since $\Proj_1\EIn(m) = \Proj_1\EOut(n)$ we have:
   \[ 
     \Proj_1\EOut(n) \intersect \Preds  = \emptyset \mbox{~and~} 
     \Proj_1\EOut(n) \intersect \RPreds = \emptyset
   \]
   Which implies that $\EOut(n)$ which is a contradiction 
   since $m \in \Proj_2\EOut(n)$.

   Therefore we can not have $\setje{n,m} \subseteq \ImplNodes$ if
   $n \in \Nodes \land m \in \Proj_2\EOut(n)$.
\end{proof}
As an example, consider the tree depicted on the right side of 
\SRef{\ExampleImplicits} in the context of the schema shown there as well.
The open circles represent nodes that do not have to be shown on the screen
when this tree is shown to the user.
\EpsfFig[\ExampleImplicits]{Examples of Implicit Nodes}

A further result is the following lemma stating that each implicit node
must correspond to a relationship type, and that the predicators used to label the
incoming and outgoing edges are all of the same relationship type:
\begin{lemma}(single relationship type involvement)[ImplNodesRelTypes]
   $n \in \ImplNodes \implies 
      \Al{l \in \Proj_1 (\EOut(n) \union \EIn(n))}{\Rel(l) = \Obj(n)}$
\end{lemma}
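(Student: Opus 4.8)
The plan is to fix an implicit node $n$, unpack the three conditions in the definition of $\ImplNodes$, and then verify the claimed equality for an arbitrary $l \in \Proj_1(\EOut(n) \union \EIn(n))$ by distinguishing whether $l$ labels an outgoing or an incoming edge of $n$. From $n \in \ImplNodes$ we have $\EOut(n) \neq \emptyset$, $\EIn(n) \neq \emptyset$, $\Proj_1\EOut(n) \intersect \Preds = \emptyset$ and $\Proj_1\EIn(n) \intersect \RPreds = \emptyset$. Since $\Links = \Preds \union \RPreds$, the last two conditions say exactly that every label of an outgoing edge of $n$ lies in $\RPreds$ and every label of an incoming edge of $n$ lies in $\Preds$; so the two cases are exhaustive, and it suffices to settle each of them.

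I would dispatch the incoming case first, as it is immediate. Suppose $\tuple{l,m} \in \EIn(n)$, so $l \in \Preds$ by the above. By the definition of $\EIn$ this means $\tuple{l,n} \in \EOut(m)$, whence the edge well-formedness axiom yields $\Obj(n) = \End(l)$; and since $l \in \Preds$ the definition of $\End$ gives $\End(l) = \Rel(l)$. Hence $\Rel(l) = \Obj(n)$. This incidentally establishes the informal claim made just before the lemma that an implicit node represents a relationship type: $\EIn(n)$ is nonempty, so $\Obj(n) = \Rel(l) \in \RelTypes$ for any incoming link $l$.

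For the outgoing case, suppose $\tuple{l,m} \in \EOut(n)$, so $l \in \RPreds$. The edge well-formedness axiom now gives only $\Start(l) \TypeRel \Obj(n)$, and by the definition of $\Start$ (with $l \in \RPreds$) this reads $\Rel(l) \TypeRel \Obj(n)$, where $\Rel(l) \in \RelTypes$. To turn this type-relatedness into the desired equality I would use the observation that a relationship type is type-related only to itself: because $\Spec$ relates only object types, a relationship type occurs in no $\Spec$-pair, so a routine induction over the four derivation rules for $\TypeRel$ shows the only pair derivable from it is the reflexive one. Applying this to $\Rel(l) \TypeRel \Obj(n)$ gives $\Obj(n) = \Rel(l)$, completing this case and hence the lemma.

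The only real obstacle is this step in the outgoing case: the well-formedness axiom constrains a reversed-role edge merely by \emph{type-relatedness} of its source to $\Obj(n)$, not by equality, so one must know that $\TypeRel$ collapses to the identity on relationship types. This rests on subtyping — and therefore $\TypeRel$ — never linking a relationship type to anything else, which is straightforward in the basic setting considered here but is precisely the kind of thing that would have to be re-examined under the advanced ORM concepts flagged earlier, in line with the remark there about refining $\TypeRel$. Everything else is routine unfolding of the definitions of $\EIn$, $\Start$, $\End$ and $\ImplNodes$.
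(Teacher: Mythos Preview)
Your proof is correct and follows essentially the same route as the paper's: split on incoming versus outgoing links, use the edge well-formedness axiom together with the definitions of $\End$ and $\Start$ to obtain $\Rel(l)=\Obj(n)$ (respectively $\Rel(l)\TypeRel\Obj(n)$), and then collapse the latter to an equality using that relationship types are $\TypeRel$-isolated. Your write-up is in fact more explicit than the paper's --- you spell out the appeal to $\EIn$ and to \SRef{EdgeWF} in the incoming case, and you justify the collapsing step by an induction over the $\TypeRel$ rules rather than simply citing it as an ORM fact --- but the structure and the key idea are the same.
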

\begin{proof}
   If $n \in \ImplNodes$ then we must, due to the definition of $\ImplNodes$ 
   have $l \in \Proj_1 \EIn(n) \implies l \in \Preds$.
   This implies that $\Rel(l) = \Obj(n)$.
   So $\Al{l \in \Proj_1 \EIn(n)}{\Rel(l) = \Obj(n)}$

   Furthermore, if $l \in \Proj_1 \EOut(n)$, it follows from the definition
   of $\ImplNodes$ that $l \in \RPreds$.
   From \SRef{EdgeWF} follows that $\Start(l) \TypeRel \Obj(n)$, which 
   can be reformulated as $\Rel(l) \TypeRel \Obj(n)$.
   Since two relationship types can, in ORM, only be type related if they
   are the same, we therefore have: $\Rel(l) = \Obj(n)$.
   As a result: $\Al{l \in \Proj_1 \EOut(n)}{\Rel(l) = \Obj(n)}$.
\end{proof}
To cater for identification, and in particular complex identification,
nodes can have associated a number of identifying nodes.
This is captured by the function 
\[ \NRefSch: \Nodes \PartFunc 
             \Nodes \union
             (\Preds \Carth \Nodes)^+ \union 
             (\Preds \Carth \Preds \Carth \Nodes)^+ \]
which can defines an (possibly empty) identification tree for each node in 
the tree provided by $\EOut$.
This function must always behave conform the identification given in 
the schema:
\begin{Axiom}{T}[IdfConformity]
   For each $n \in \FuncDom(\NRefSch)$ we should have:
   \begin{enumerate}
      \item $\NRefSch(n) = [m] \implies \RefSch(\Obj(n)) = [\Obj(m)]$
      \item $\NRefSch(n) = [\tuple{p_1,m_1},\ldots,\tuple{p_l,m_l}] \implies$\Eol
            \Tab \Tab 
            $\RefSch(\Obj(n)) = [p_1,\ldots,p_l] \land
             \Al{1 \leq i \leq l}{\Player(p_i) = \Obj(m_i)}$
      \item $\NRefSch(n) = [\tuple{p_1,q_1,m_1},\ldots,\tuple{p_l,q_l,m_l}] \implies$\Eol
            \Tab \Tab
            $\RefSch(\Obj(n)) = [\tuple{p_1,q_1},\ldots,\tuple{p_l,q_l}] \land
             \Al{1 \leq i \leq l}{\Player(q_i) = \Obj(m_i)}$
   \end{enumerate}
   where $m,m_1, \ldots, m_l \in \Nodes$
   and $p_1,\ldots,p_l,q_1,\ldots,q_l \in \Preds$.
\end{Axiom}
The above axiom can thus be seen as an invariance requirement on the 
algorithm that builds the tree of the example grid.

Not all nodes used by $\EOut$ must necessarily have an identification
tree associated.
For instance, in the first example grid of the running example, 
\SF{Line Info} instances are not denoted by means of their full 
identification.
We used textual representations (surogates) of abstract instances such as: 
\SF{LineInfo1}, \SF{LineInfo2}, etc.

For the remaining axioms on identification trees in the example grids, we need
one more derived function.
The function $\IdfNodes: \Nodes \Func \Nodes^+$ which (analogously to
$\IdfObjs$) determines the set of nodes needed to directly identify a given 
node. 
Its definition is provided as:
\[ \IdfNodes(x) ~\Eq~ 
      \left\{ \begin{array}{ll}
         [y]              & \mbox{if~} \NRefSch(x) = [y]\Eol
         [y_1,\ldots,y_l] & \mbox{if~} \NRefSch(x) = [\tuple{p_1,y_1},\ldots,\tuple{p_l,y_l}]\Eol
         [y_1,\ldots,y_l] & \mbox{if~} \NRefSch(x) = [\tuple{p_1,q_1,y_1},\ldots,\tuple{p_l,q_l,y_l}]
      \end{array} \right. 
\]
Note again that we presume the existance of an implicit coercion function 
between sequences and sets.

We can now require each identification tree to be a tree indeed.
\begin{Axiom}{T}(acyclic)
   If $x \in \FuncDom(\NRefSch)$, then we have 
   $\SF{NonCyclic}(x,\setje{x})$, where:
   \[ \SF{NonCyclic}(x,X) ~\iff~
      X \intersect \IdfNodes(x) = \emptyset \land
      \Al{y \in \IdfNodes(x)}{\SF{NonCyclic}(y,X \union \setje{y})}
   \]
\end{Axiom}
Note that for simple ORM schemas the fact that $\NRefSch$ is a tree
for each object type in the tree spanned by $\EOut$, follows directly from
\SRef{IdfConformity} and the acyclicity of the identification trees spanned
by $\RefSch$. 
However, when using the polymorphism concept in more advanced ORM models, in
particular when defining recursive data structures, $\RefSch$ does not
necessarily have to span trees anymore (but $\NRefSch$ is still required to
do so).

For obvious reasons, all nodes used in the identification trees have an 
object type associated:
\begin{Axiom}{T}
    $\union\FuncRan(\IdfNodes) \subseteq \FuncDom(\Obj)$
\end{Axiom}
The example grid tree and the identification trees should not 
be intermixed (except for the roots of the identification trees):
\begin{Axiom}{T}
   $\union\FuncRan(\IdfNodes) \intersect 
    (\FuncDom(\EOut) \union \FuncDom(\EIn)) = \emptyset$
\end{Axiom}
The root nodes of the identification trees should indeed be part
of the tree for the example grid:
\begin{Axiom}{T}
   $(\FuncDom(\IdfNodes) \SetMinus \union\FuncRan(\IdfNodes)) \subseteq 
     (\FuncDom(\EOut) \union \FuncDom(\EIn))$
\end{Axiom}
Finally, one good default rule is to automatically add simple identifications.
So the following rule should be an invariant when manipulating the trees
(e.g. when adding a simply identified object type):
\begin{Axiom}{T}
   $\card{\RefSch(\Obj(x))} = 1 \implies \DefinedAt{\NRefSch}{x}$
\end{Axiom}
A tree for the example grid, in the context of an ORM schema, is now 
completely determined by the following five components:
\[ \tuple{\Nodes,\EOut,\Obj,\Order,\NRefSch} \]

   \section{Negotiating the Size of the Example Space}
\SLabel{section}{Negotiate}

As the title of this section suggests, in this section we concern ourselves
with a negotiation mechanism to determine the number of examples that are
to be used in the example grid.
Such a negotation phase is needed because some object types may have a limited 
set of instances. 
In particular value types such as \SF{Gender} which will generally have two 
or three instances only. 

The first question we need to answer is the maximum number of instances that 
a given type may have.
Due to relationships between types and the constraint patterns associated to
the relationships; limiting the number of instances of a value type may propagate
through the conceptual schema.
As an example consider the schema shown in \SRef{\SizePropag}.
Object type \SF{B} has a maximum number of instances of 2. 
As each instance of object type \SF{A} must play relationship \SF{f} with 
a unique instance of \SF{B}, there can be only two instances of \SF{A}.
As a result, there can be only two instances of relationship \SF{f}, so
if \SF{f} would be objectified this could lead to another propagation of
a size limitation.
In the context of larger schemas, these propagations may cause ``ripple'' 
effects through the entire schema. 
Note that if \SF{A} is connected to \SF{B} through a series of other 
relationship types, then it could even be the case that the maximum size 
of \SF{B} needs to be reduced further! 
\EpsfFig[\SizePropag]{Propagating Maximum Size of Populations}

Initially, object types are presumed to have a potentially infinite maximum
population.
For this purpose we need to introduce the notion of {\em infinity} as an 
explicit element in our calculations.
Let $\infty$ denote infinity, then our population size calculations  
take place within the set: $\NInf \Eq \N \union \setje{\infty}$.
For $\NInf$ we inherit the $+$, $\times$ and $<$ operations from $\N$ with the 
following additional cases:
\begin{quote}
  if $n \in \N$, then:\Eol
  \Tab $n \leq \infty$\Eol
  \Tab $n + \infty = \infty + n = \infty$\Eol
  \Tab $n \times \infty = \infty \times n = \infty$
\end{quote}
As the minimum $\Min(n,m)$ of two natural numbers is defined in terms of 
$<$, we obviously have:
$n \in \N \implies \Min(n,\infty) = \Min(\infty,n) = n$.

\subsection{Generating Patterns}
In determining the maximum sizes of populations, we need to generate for
each relationship type the significant combinations of instances.
As stated before, we consider a combination of instances to be significant
if it shows all allowed combinations with respect to the cardinality constraints
defined on that relationship type.
The patterns are generated by the algorithm given below\footnote{I would 
like to thank L. Campbell for providing me with the first informal draft 
version of this algorithm.}.
This algorithm takes as input the relationship type to be populated and
the maximum sizes of the types (determined so far); in particular the
players of the roles in the relationship type and the maximum size of
the relationship type itself. 
The latter size is relevant in the case of objectified relationship types.
Suppose in the example of \SRef{\SizePropag}, \SF{A} is actually an objectified
relationship type.
In such a case we can only generate two instances for relationship type \SF{A}.
The algorithm itself is given as:
\begin{quote}
   $\GenPattern: \RelTypes \Carth 
                 (\Types \Func \NInf)
                 ~\Func~ 
                 (\Types \PartFunc \N) \Carth
                 \Powerset(\Preds \Func \N)$\Eol
   $\GenPattern(\Vr{Rel},\Vr{Size}) ~\Eq~$\Eol
   \Tab \VAR\Eol
   \Tab \Tab \begin{tabular}{ll}
                \Vr{Pattern}:    & $\Powerset(\Preds \Func \N)$;\Eol
                \Vr{FreshTuple}: & $\Preds \Func \N$;\Eol
                \Vr{WorkTuple}:  & $\Preds \Func \N$;\Eol
                \Vr{Used}:       & $\Types \PartFunc \NInf$;\Eol
                $p$:             & $\Preds$;
             \end{tabular}\Eol
   \Eol
   \Tab \MACRO\Eol
   \Tab \Tab $\Vr{Extendable}(P: \Powerset(\Preds)) \equiv$\Eol 
   \Tab \Tab \Tab $\Al{p \in P}{\Vr{Used}(\Player(p)) < \Vr{Size}(\Player(p))} ~\land$\Eol
   \Tab \Tab \Tab $\Vr{Used}(\Rel(p)) + \card{P} \leq \Vr{Size}(\Rel(p))$; \Eol
   \Eol  
   \Tab \Tab $\Vr{IncrUsed}(p: \Preds) \equiv$\Eol
   \Tab \Tab \Tab \BEGIN\Eol
   \Tab \Tab \Tab \Tab $\Vr{Used}(\Player(p)) \Pab 1$;\Eol
   \Tab \Tab \Tab \Tab $\Vr{Used}(\Rel(p)) \Pab 1$;\Eol
   \Tab \Tab \Tab \END;\Eol
   \Eol
   \Tab \BEGIN\Eol
   \Tab \Tab \SF{\# Initialise variables}\Eol
   \Tab \Tab \Vr{Pattern} := $\emptyset$;\Eol
   \Tab \Tab \Vr{Used}(\Vr{Rel}) := 0;\Eol
   \Tab \Tab \FOREACH $p \in \Roles(\Vr{Rel})$ \DO\Eol
   \Tab \Tab \Tab \Vr{Used}($\Player(p)$) := 0;\Eol
   \Tab \Tab \ENDFOR;\Eol
   \Eol
   \Tab \Tab \WHILE $\Vr{Extendable}(\Roles(\Vr{p}))$ \DO\Eol
   \Tab \Tab \Tab \SF{\# Generate fresh tuple}\Eol
   \Tab \Tab \Tab \FOREACH $p \in \Roles(\Vr{Rel})$ \DO\Eol
   \Tab \Tab \Tab \Tab $\Vr{IncrUsed}(p)$;\Eol
   \Tab \Tab \Tab \Tab $\Vr{FreshTuple}(p) := \Vr{Used}(\Player(p))$;\Eol
   \Tab \Tab \Tab \ENDFOR;\Eol
   \Eol
   \Tab \Tab \Tab \SF{\# Probe uniqueness}\Eol
   \Tab \Tab \Tab $\Vr{Pattern} \Pab \setje{\Vr{FreshTuple}}$;\Eol
   \Tab \Tab \Tab \FOREACH $p \in \Roles(\Vr{Rel})$ \DO\Eol
   \Tab \Tab \Tab \Tab \Vr{WorkTuple} := \Vr{FreshTuple};\Eol 
   \Eol
   \Tab \Tab \Tab \Tab \SF{\# Try to mutate tuple}\Eol
   \Tab \Tab \Tab \Tab \IF $\lnot\Ex{\tau \subseteq \Roles(\Vr{Rel})}{
                                \Unique(\tau) \land p \not\in \tau
                            } \land
                            \Vr{Extendable}(\setje{p})$ 
                       \THEN\Eol
   \Tab \Tab \Tab \Tab \Tab $\Vr{IncrUsed}(p)$;\Eol
   \Tab \Tab \Tab \Tab \Tab $\Vr{WorkTuple}(p) := \Vr{Used}(\Player(p))$;\Eol
   \Tab \Tab \Tab \Tab \Tab $\Vr{Pattern} \Pab \setje{\Vr{WorkTuple}}$;\Eol
   \Tab \Tab \Tab \Tab \FI;\Eol
   \Tab \Tab \Tab \ENDFOR;\Eol
   \Eol
   \Tab \Tab \Tab \SF{\# Generate nil tuple}\Eol
   \Tab \Tab \Tab \FOREACH $p \in \Roles(\Vr{Rel})$ \DO\Eol
   \Tab \Tab \Tab \Tab $\Vr{IncrUsed}(p)$;\Eol
   \Tab \Tab \Tab \Tab $\Vr{FreshTuple}(p) := 0$;\Eol
   \Tab \Tab \Tab \ENDFOR;\Eol
   \Eol
   \Tab \Tab \Tab \SF{\# Probe totality}\Eol
   \Tab \Tab \Tab \FOREACH $p \in \Roles(\Vr{Rel})$ \DO\Eol
   \Tab \Tab \Tab \Tab \Vr{WorkTuple} := \Vr{FreshTuple};\Eol
   \Eol 
   \Tab \Tab \Tab \Tab \SF{\# Try to mutate tuple}\Eol
   \Tab \Tab \Tab \Tab \IF $\lnot\Total(p) \land \Vr{Extendable}(\setje{p})$ \THEN\Eol
   \Tab \Tab \Tab \Tab \Tab $\Vr{IncrUsed}(p)$;\Eol
   \Tab \Tab \Tab \Tab \Tab $\Vr{WorkTuple}(p) := \Vr{Used}(\Player(p))$;\Eol
   \Tab \Tab \Tab \Tab \Tab $\Vr{Pattern} \Pab \setje{\Vr{WorkTuple}}$;\Eol
   \Tab \Tab \Tab \Tab \FI;\Eol
   \Tab \Tab \Tab \ENDFOR;\Eol
   \Tab \Tab \WEND;\Eol
   \Eol
   \Tab \Tab \RETURN $\tuple{\Vr{Used},\Vr{Pattern}}$;\Eol
   \Tab \END.
\end{quote} 

\subsection{Schema Plausibility Check}
An interesting spin-off of the pattern generation algorithm is that when
using this algorithm to determine the maximum size of all types in a
conceptual schema, the result can be used to do a plausibility on the 
conceptual schema.
If some type has a number maximum number of instances of $0$, it is highly 
likely that there is an error in the constraint patterns of the conceptual 
schema.
Since the pattern generation algorithm only uses a limited class of patterns,
this kind of plausibility check may be a bit ``oversensitive''.
 
When determining the maximum sizes of all types in a conceptual schema, the
results of the pattern generation algorithm can be used to refine, the current 
maximum sizes of the object types can be resized.
This resizing is done by the algorithm below.
It (re-)calculates the maximum sizes of the object types involved in each 
relationship type for which one of the involved types already has a maximum
size that is not infinite.
\begin{quote}
   $\ReSize: (\Types \Func \NInf) \Func 
             (\Types \Func \NInf)$\Eol
   $\ReSize(\Vr{Size}) ~\Eq~$\Eol
   \Tab \VAR\Eol
   \Tab \Tab \begin{tabular}{ll}
                \Vr{Pattern}: & $\Powerset(\Preds \Func \N)$;\Eol
                \Vr{Used}:    & $\Types \PartFunc \NInf$;\Eol
                \Vr{ToDo}:    & $\Powerset(\RelTypes)$;\Eol
                $x,r$:        & $\Types$;\Eol
                $p$:          & $\Preds$;
             \end{tabular}\Eol
   \Eol
   \Tab \BEGIN\Eol
   \Tab \Tab \SF{\# We should only take relationship types with at least one finite}\Eol
   \Tab \Tab \SF{\# size for the types involved into consideration. Otherwise, $\GenPattern$}\Eol
   \Tab \Tab \SF{\# would not terminate.}\Eol
   \Tab \Tab $\Vr{ToDo} := \Set{\Rel(p)}{\Vr{Size}(\Player(p)) \neq \infty \lor \Vr{Size}(\Rel(r)) \neq \infty}$;\Eol
   \Tab \Tab \SF{\# One way to optimise this further is to restrict the \Vr{ToDo} set}\Eol
   \Tab \Tab \SF{\# to those relationships for which one of the involved sizes has}\Eol
   \Tab \Tab \SF{\# changed during the last (or initial!) iteration.}\Eol
   \Eol
   \Tab \Tab \SF{\# Recalculate maximum sizes}\Eol
   \Tab \Tab \FOREACH $r \in \Vr{ToDo}$ \DO\Eol
   \Tab \Tab \Tab $\tuple{\Vr{Used},\Vr{Pattern}} := \GenPattern(r,\Vr{Size})$\Eol
   \Tab \Tab \Tab \FOREACH $x \in \Set{\Player(p)}{p \in \Roles(r)} \union \setje{r}$ \DO\Eol 
   \Tab \Tab \Tab \Tab $\Vr{Size}(x) := \Min(\Vr{Size}(x),\Vr{Used}(x))$;\Eol
   \Tab \Tab \Tab \ENDFOR;\Eol
   \Tab \Tab \ENDFOR;\Eol
   \Eol
   \Tab \Tab \SF{\# Propagate maximums in subtype hierarchy}\Eol
   \Tab \Tab \FOREACH $x,y \in \ObjTypes$ \DO\Eol
   \Tab \Tab \Tab \IF $x \Spec y$ \THEN\Eol
   \Tab \Tab \Tab \Tab $\Vr{Size}(x) := \Min(\Vr{Size}(y),\Vr{Size}(x))$;\Eol
   \Tab \Tab \Tab \FI;\Eol
   \Tab \Tab \ENDFOR;\Eol
   \Eol
   \Tab \Tab \RETURN \Vr{Size};\Eol
   \Tab \END.
\end{quote}
The idea is now to constantly call the above resize algorithm until the 
eventual maximum sizes have been stabalised.
We do this by means of the following fixed-point calculation:
\begin{quote}
   $\CalcSizes: \Func (\Types \Func \NInf)$\Eol
   $\CalcSizes() ~\Eq~$\Eol
   \Tab \VAR\Eol
   \Tab \Tab \begin{tabular}{ll}
                \Vr{Size}:    & $\Types \Func \NInf$;\Eol
                \Vr{NewSize}: & $\Types \Func \NInf$;\Eol
             \end{tabular}\Eol
   \Eol
   \Tab \BEGIN\Eol
   \Tab \Tab $\Vr{NewSize} := \MaxSize$;\Eol
   \Eol
   \Tab \Tab \SF{\# Do fixed point calculation}\Eol
   \Tab \Tab \REPEAT\Eol
   \Tab \Tab \Tab $\Vr{Size} := \Vr{NewSize}$;\Eol
   \Tab \Tab \Tab $\Vr{NewSize} := \ReSize(\Vr{Size})$;\Eol
   \Tab \Tab \UNTIL $\Vr{NewSize} = \Vr{Size}$;\Eol
   \Tab \END.
\end{quote}
Note that this algorithm terminates since the number of type classes is
finite and when resizing the maximum sizes we always choose the minimum 
value. 
In this last algorithm, $\MaxSize$ is the initial setting of the maximum
sizes.
This initial setting completely depends on the number of instances 
(or generatable) for the value types in the conceptual schema.
In \SRef{ORMBase} we introduced the function $\DomSize: \ValueTypes \Func \N$ 
as the function that provides these sizes.
From this we can derive the initial maximum size for any object type as
follows:
\begin{quote}
   $\MaxSize: \Types \Func \NInf$\EOl
   $\MaxSize(x) ~\Eq~
      \left\{\begin{array}{ll}
         \DomSize(x) & \mbox{if~} \Ex{x}{\DefinedAt{\DomSize}{x}}\Eol
         \infty      & \mbox{otherwise}
      \end{array} \right.$
\end{quote}
We are now in a position to interpret the results from $\CalcSizes$.
Let $\Size(x)$ be the size of type $x$ resulting after $\CalcSizes$.
If $\Size(x) = 0$ for some type $x$, then it is highly likely that there is 
a problem with the constraint patterns. 
In such a case, each relationship type with a player $y$ such that 
$\Size(y) = 0$ and a player $z$ such that $\Size(z) > 0$ should be examined.

Furthermore, if there is a type $x$ such that $\Size(x) < \MaxSize(x)$, there
is a limited likelyhood that there is a problem with the constraint patterns.
In this case, each relationship with a player $y$ such that 
$\Size(y) < \MaxSize(y)$ and a player $z$ such that $\Size(z) = \MaxSize(z)$,
should be examined.

\subsection{The Example Space}
When generating examples for an example grid, we do this for each 
``umbrella'' in the tree, i.e.\/ a node and its direct descendants.
The umbrella associated to a node $n$ is defined formally by:
\begin{quote}
   $\Umbr: \Nodes \Func \Powerset(\Nodes)$\Eol
   $\Umbr(n) ~\Eq~ 
      \Proj_2 \EOut(n) \union 
      \Union\Sub{y \in \Proj_2 \EOut(n) \intersect \ImplNodes} \Umbr(y)$
\end{quote}
From \SRef{NoImplNeighb} follows that an umbrella can be at most three
nodes deep: the root, one layer of explicit or implicit nodes, and the
layer containing the off-spring of the implicit nodes of layer 2.
An example of two umbrellas is shown in \SRef{\Umbrella}.
These two umbrellas correspond to two tables of the example grid shown
in \SRef{\SampleGrid}.
When generating the examples for an entire tree, we only have to consider
all umbrellas of the nodes which are not implicit, since the implicit nodes
are already contained in these umbrellas.
{\def\Scale{1.2} \EpsfFig[\Umbrella]{Two example umbrellas}}

The set of relationships involved in an umbrella can now simply be derived 
as follows:
\begin{quote}
   $\RelSet: \Nodes \Func \Powerset(\RelTypes)$\Eol
   $\RelSet(n) ~\Eq~ \Set{\Rel(l)}{l \in \Proj_1 \EOut(n)}$
\end{quote}
where $\Rel$ is presumed to be generalised to elements of $\RPreds$.
Note that we only need to look at the set of outgoing edges from root of 
the umbrella as the outgoing edges from the nodes of the second layer
are all implicit nodes, which allows us to apply \SRef{ImplNodesRelTypes}.

A crucial first step in the generation of the examples is the negotiation
of the number of instances in the root of each umbrella.
In this negotiation we disregard the context of each umbrella, so we can
not simply use the results of the $\CalcSizes$ function.
The reason to ignore the context lies in the fact that when validating the
constraint pattern of one relationship type, one does not want to have a
negative influence on the number of examples by a (possible) problem in
another part of the schema.
In the definition, again an initial maximum size of type populations
is needed.
From this we can derive the maximum size for any object type using the 
following recursive function:
\begin{quote}
   $\MaxSize: \ObjTypes \Func \N$\EOl
   $\MaxSize(x) ~\Eq~
      \left\{\begin{array}{ll}
         \DomSize(x) & \mbox{if~$x \in \ValueTypes$}\Eol
         \Pi\Sub{y \in \IdfObjs(x)} \MaxSize(y) & \mbox{otherwise} 
      \end{array} \right.$
\end{quote}
Using this function the actual `negotiation' function can be defined,
determining the proper size of the root of an umbrella:
\begin{quote}
   $\NodeSize: \Nodes \Func \N$ \Eol 
   $\NodeSize(n) ~\Eq~ 
     \Min({\it MaxUserSizePref},\Min\Sub{r \in \RelSet(n)} \Vr{Usage}(r,n))$
\end{quote}
where $\Vr{Usage}$ is defined as:
\begin{quote}
   $\Vr{Usage}: \RelTypes \Carth \Nodes \Func \N$\Eol
   $\Vr{Usage}(r,n) ~\Eq~ 
    \left\{ \begin{array}{ll}
       \Vr{Used}(\Obj(n)) & \mbox{if~} \MaxSize(n) \not= \infty\Eol
       \infty             & \mbox{otherwise}
    \end{array} \right.$\Eol
   such that: 
   $\tuple{\Vr{Used},\Vr{Pattern}} = \GenPattern(r,\MaxSize)$
\end{quote}
and \Vr{MaxUserSizePref} is a user defined constant providing the preferred
maximum size of root nodes. 
Note: we presume $\Max\Sub{x \in X}f(x)$ returns $0$ if $X = \emptyset$.

   \section{Filling an Example Grid}
\SLabel{section}{ExGen}

In this section we do the final step and actually populate the example
grids.
Thus far the only result from the $\GenPattern$ algorithm we used was the
usage of instances.
In this section, the patterns generated by $\GenPattern$ are finally utilised
to generate the real examples.
 
\subsection{Filling Object Types}
When generating instances for the edges of an umbrella, we need to generate
instances for the nodes (object types) first.
Let $\GenInst: \ObjTypes \Carth \N \Func \Omega$ be a given function that 
generates instances for object types.
This function need the natural number to generate unique (and yet 
deterministic) instances.

When applied for a value type $x$, the result of $\GenInst(x,n)$ is one of the
instances of the domain for values of $x$.
Suppose that the user has given 8/7/94, 8/8/94, 8/9/94 as examples for the 
\SF{Date} value, then $\GenInst(\SF{Date},2)$ would result in 8/8/94.
When the $\GenInst(x,n)$ is used for a non-value type, a surrogate instance
is generated.
For instance, $\GenInst(\SF{LineInfo},1)$ could result in $\SF{LineInfo1}$.
The set $\UniDom$ is used in this report as a general domain for simple
instances from value types and composed instances from compositely identified
types.
Using this primitive function, the following more refined generation function
can be defined, which also takes the required details of identifications 
($\IdfNodes$) into account:
\begin{quote} 
   $\GetInst: \ObjTypes \Carth \N \Func \UniDom$\Eol
   $\GetInst(x,n) ~\Eq~
    \left\{ \begin{array}{ll}
       {\it nil}                  & \mbox{if~} n=0\Eol 
       \Compose(\IdfNodes(x),n-1) & \mbox{if~} \DefinedAt{\IdfNodes}{x}\Eol
       [\GenInst(x,n)]            & \mbox{otherwise}
    \end{array} \right.$
\end{quote}
Note that the $n=0$ case is a special case.
Generating a nil value for $n=0$ allows us to treat patterns resulting from 
optional roles uniform to the other patterns.
The $\Compose$ function is used to construct the actual instance in the
case of an object type with an identification provided by $\IdfNodes$.
\begin{quote}
   $\Compose: \ObjTypes^+ \Carth \N \Func \UniDom$\Eol 
   $\Compose([x_1,\ldots,x_n],m) ~\Eq~
    \left\{ \begin{array}{ll}
       [\GetInst(x_1,\gamma(m)+1)] & \mbox{if~} n=1\EOl
       [\Compose([x_1],\gamma(m) \SFOp{DIV} \MaxSize(x_1))] 
        \SeqConc & \Eol
       \Tab [\Compose([x_2,\ldots,x_n],\gamma(m) \SFOp{MOD} \MaxSize(x_1))] &
       \mbox{otherwise}
    \end{array} \right.$\EOL
   where:
   \mbox{~~~} $\gamma(m) = \left\{ \begin{array}{ll} 
                              x/2 & \mbox{if~} \SF{IsEven(x)} \Eol
                              \Pi\Sub{1 \leq i \leq n}\MaxSize(x_i) - 
                              (m+1)/2 & \mbox{otherwise}
                           \end{array} \right.$
\end{quote}
Note that the use of the $\gamma$ function causes a ``natural'' spread in the 
use of the examples.

\subsection{Filling an Example Grid}
For filling an umbrella with instances all that remains to be done is to call
$\GenPattern$ for each relationship type contained in the umbrella with the 
negotiated nodesizes given in $\NodeSize$, 
generate the concrete instances using $\GetInst$, and finally order the
result.
All this is done by the $\GenPop$ algorithm.
The algorithm takes the root node of the current umbrella as its input 
parameter and results in a population of the relationship types contained in 
the umbrella, together with an {\em ordered} population of the object types 
playing a role in these relationship types.
The results can than be used (together with the earlier discussed $\Order$ 
for the nodes) to present the examples in an ordered way.
The algorithm is defined as:
\begin{quote}
   $\GenPop: \Nodes \Func (\RelTypes \PartFunc \Powerset(\Preds \PartFunc \UniDom)) \Carth
                   (\ObjTypes \PartFunc \UniDom^+)$\Eol
   $\GenPop(n) ~\Eq~$\Eol
   \Tab $\VAR$\Eol
   \Tab \Tab $\begin{array}{ll}
                 \Vr{RPop}:     & \RelTypes \PartFunc \Powerset(\Preds \PartFunc \UniDom)\Eol
                 \Vr{OPop}:     & \ObjTypes \PartFunc \UniDom^+\Eol
                 \Vr{Rel}:      & \RelTypes\Eol
                 \Vr{Tuple}:    & \Preds \PartFunc \UniDom\Eol
                 \Vr{NewTuple}: & \Preds \PartFunc \N\Eol
                 \Vr{Role}:     & \Preds
              \end{array}$\Eol
   \Eol
   \Tab \BEGIN\Eol
   \Tab \Tab \FOREACH $\Vr{Rel} \in \RelSet(n)$ \DO\Eol
   \Tab \Tab \Tab $\tuple{\Vr{Used},\Vr{Pattern}} := \GenInst(\Vr{Rel},\NodeSize)$;
   \Eol
   \Tab \Tab \Tab \FOREACH $\Vr{Tuple} \in \Vr{Pattern}$ \DO\Eol
   \Tab \Tab \Tab \Tab \SF{\# We presume that $\N \subseteq \UniDom$}\Eol
   \Tab \Tab \Tab \Tab \Vr{NewTuple} := \Vr{Tuple}\Eol
   \Eol
   \Tab \Tab \Tab \Tab \FOREACH $\Vr{Role} \in \Roles(\Var{Rel})$ \DO\Eol
   \Tab \Tab \Tab \Tab \Tab $\Vr{NewTuple}(\Vr{Role}) := \GetInst(\Player(\Vr{Role}),\Vr{Tuple}(\Vr{Role}))$;\Eol
   \Tab \Tab \Tab \Tab \Tab $\Vr{OPop}(\Player(\Vr{Role})) \Seqpab [\Vr{NewTuple}(\Vr{Role}))]$;\Eol
   \Tab \Tab \Tab \Tab \ENDFOR;\Eol
   \Tab \Tab \Tab \Tab $\Vr{RPop}(\Vr{Rel}) \Pab \setje{\Vr{NewTuple}}$;\Eol
   \Tab \Tab \Tab \ENDFOR;\Eol
   \Tab \Tab \ENDFOR;\Eol
   \Eol
   \Tab \Tab $\Vr{OPop} := \ReOrder(\Vr{RPop},\Vr{OPop},\Obj(n))$;
   \Eol
   \Tab \Tab $\RETURN \tuple{\Vr{RPop},\Vr{OPop}}$;\Eol
   \Tab \END.
\end{quote}
The $\ReOrder: (\RelTypes \PartFunc \Powerset(\Preds \PartFunc \UniDom)) \Carth
               (\ObjTypes \PartFunc \UniDom^+) \Carth
               \Types ~~\Func~~
               (\ObjTypes \PartFunc \UniDom^+)$
function is used to order the instances in the result.
The order is based on the number of tuples in the relationship examples that 
use the instances.
We do not provide an ordering algorithm for this function and leave that to
the choice of the programmers.
However, the following ordering condition must be met.
If $P = \ReOrder(\Vr{RPop},\Vr{OPop},x)$, then:
\[
   \Al{i,j \in \FuncDom(P)}{~
       i < j \implies 
          \card{\Tuples(\Vr{RPop},P[i],x)} \geq 
          \card{\Tuples(\Vr{RPop},P[j],x)}
   ~}
\]
where $\Tuples(\Vr{RPop},v,x) = 
       \Set{t}{\Ex{p: \Player(p)=x}{t \in \Vr{RPop}(\Rel(p)) \land t(p) = v}}$.

   \section{Conclusions}
\SLabel{section}{Concl}

In this report we have presented an algorithm to generate examples for a
selection of relationship types, such that the examples are significant
(to a certain degree).
These examples can be used to fill example grids, allowing users to validate
cardinality constraints by looking at example patterns.

As a spin-off, we also discussed the option of calculating the maximum sizes
of all types in a conceptual schema.
This would allow us to detect possible problems with cardinality constraints.

The next step is to integrate these ideas into DBCreate and in a later stage
InfoModeler itself.
The definitions in this report are already suited for the ORM schemas as they
can be specified in InfoModeler.
They are not yet suited for the extended ORM concepts as introduced in 
\cite{Report:94:Halpin:ORMPoly}.
However, the modification of the here presented algorithms to cover these
extensions is not expected to be hard.

   \AddBib{asy}
   \BIBLIOGRAPHY{alpha}
\end{document}